\documentclass{article}
\usepackage{graphicx} 
\usepackage{amsmath}
\usepackage{amsfonts}
\usepackage{amssymb}
\usepackage{amsthm}
\usepackage{hyperref}
\usepackage{mathtools}
\usepackage{xcolor}
\usepackage{tikz}
\usepackage{anyfontsize}
\usetikzlibrary{arrows.meta}
\usetikzlibrary{positioning,fit,calc}
\usetikzlibrary{decorations.pathmorphing}
\usetikzlibrary{bending}

\title{On certain combinatorial expressions of TASEP transition
probabilities}
\author{Lorenzo Vito Dal Zovo}
\date{Politecnico di Torino\\
    lorenzo.dalzovo@polito.it}
\begin{document}
\newtheorem{theorem}{Theorem}
\newtheorem{corollary}{Corollary}
\newtheorem{lemma}{Lemma}

\maketitle

\begin{abstract}

We study combinatorial structures arising from finite-time transition probabilities of the Totally Asymmetric Simple Exclusion Process with open boundary conditions. While much of the existing combinatorial theory regarding the TASEP concerns the steady-state distribution, we focus instead on the transient dynamics. We first show that the enumeration of transition sequences between two configurations of the open TASEP is equivalent to the enumeration of standard Young tableaux of a family of non-classical shapes which have been of recent interest in the combinatorial literature. This extends to the open-boundary setting the correspondence between the TASEP with periodic boundaries and cylindric tableaux.

We then introduce a family of tableau-like objects associated with Young diagrams in which repetitions of cells are allowed, subject to the partial order induced by the diagram. For each diagram, we collect the numbers of these objects into an exponential generating function. We prove that the entries of the homogeneous open TASEP transition matrix can be expressed as signed sums of such generating functions over suitable families of diagrams. This gives a combinatorial and order-theoretic interpretation of finite-time transition probabilities for the open TASEP, analogous to the combinatorial mappings known for steady-state probabilities.

\end{abstract}

\section{Introduction}
This paper focuses on the Totally Asymmetric Simple Exclusion Process (TASEP) with open boundaries. The TASEP describes the dynamics of hard-core particles hopping on a one-dimensional lattice. Particles can only enter from the left, hop to the right if the target site is free, and exit from the right end of the lattice. In the inhomogeneous case, each site is equipped with a given rate (including a fictional zero site accounting for particle ingress), while in the homogeneous case all rates are taken to be equal, except possibly for the entry and exit rates.

The homogeneous open  TASEP is a paradigmatic model for a number of reasons. Notably, it is an integrable model, in the sense that its steady-state distribution can be computed explicitly, it has a direct biological interpretation and it experiences boundary-driven phase transitions \cite{Blythe_2007}. 

The homogeneous TASEP steady-state distribution over its microstates can be computed through the Matrix Factorization Ansatz (MFA). The structure of this solution has led to a number of combinatorial mappings, motivated by the presence of ubiquitous number sequences within it \cite{Wood_2020}.

As such, previous works have focused on the steady-state distribution, rather than on transient probabilities, for which closed-form results are known only in the periodic case and in the case of an infinite lattice \cite{PhysRevLett.91.050601, schutz1997exact}.

The purpose of the present work is to provide a combinatorial description of certain objects related to the transient dynamics of the homogeneous TASEP, that is finite-time transition probabilities and sequences of transitions. 

In Section 2 we introduce the TASEP rigorously with sitewise inhomogeneous parameters and define all combinatorial objects of interest. In particular, we introduce a multiset generalization of the number of standard tableaux associated with a Young diagram. In particular, we view the diagram as a partially ordered set (poset) and allow for repetitions of its elements, maintaining the same order relations between different elements.

In Section 3.1 we discuss the problem of enumerating all possible walks of a given length from one configuration to another, similarly to what was done for the periodic TASEP \cite{elizalde2022walks}. In doing so we show how this problem coincides with that of enumerating the number of standard Young tableaux for a given family of Young diagrams, which have been of recent interest in the combinatorial literature \cite{Sun2015EnumerationOS,VeraLopez2017,EnumTableaux}.

In Section 3.2, we turn to finite-time transition probabilities. We express the entries of the transition probability matrix as signed sums of exponential generating functions associated with the multiset generalization of Young tableaux introduced above. These objects are distinct from other multiset-based generalizations of partially ordered structures, such as the partially ordered multisets, or pomsets, introduced to model concurrency \cite{Pratt1986ModelingCW}.

\section{Preliminaries}

\subsection{The TASEP with open boundaries}

The relative simplicity of the steady-state probability distribution, which can be given in closed form even for finite lattice sizes, has led to a vast body of work related to combinatorial interpretations of such results and techniques for the extrapolation of new quantities. The author refers the reader to \cite{Wood_2020} for further information. 

The goal of the present work is to discuss combinatorial mappings related to the computation of the exact finite-time probabilities for the TASEP on a finite segment, which are independent of the parameter choices. The Matrix Factorization Ansatz does not provide information on transient states and their probabilities, thus the techniques and results from \cite{Wood_2020} cannot be applied. 

The TASEP on a segment of length $N$ can be described by a continuous-time Markov chain (CTMC) on the state space $X_N=\{0,1\}^N$, where zero represents an empty site and one an occupied one. For any $x\in X_N$ we introduce the index map
\begin{equation*}
    \operatorname{idx}\colon \,X_N\to[2^N],\qquad
    \operatorname{idx}(x)=1+\sum_{k=1}^N x_k \cdot 2^{N-k},
\end{equation*}
where $\operatorname{idx}(x)\in\left[2^N\right]$.
We recall the definition of Kronecker product of two generic matrices $A\in\mathbb{R}^{p,\,q}$, $B\in\mathbb{R}^{m,\,n}$ as the matrix $C\in\mathbb{R} ^{mp,\,qn}$ defined as:
\[
C=A \otimes B \coloneq\left(\begin{array}{ccc}
        a_{11} B & \ldots & a_{1 n} B \\
        \vdots & \ddots & \vdots \\
        a_{m 1} B & \cdots & a_{m n} B
        \end{array}\right).
\]
We also introduce the notation $A^{\otimes N}\coloneq\underbrace{A\otimes A \otimes \cdots\otimes A}_{N\text{ times}}$ and denote by $I$ the two by two identity matrix.
One can then construct the generator:
\begin{align*}
    H &= \sum_{k=1}^{N+1} h_k, \quad\text{where:} \\
     h_1 &= \theta_1\big(a \otimes I^{\otimes(N-1)}\big),\\
    h_j &= \theta_j\big(I^{\otimes(j-2)} \otimes c \otimes I^{\otimes(N-j)}\big) \quad \text{for} \quad j=2,\dots,N, \\
    h_{N+1}&= \theta_{N+1}\big(I^{\otimes(N-1)} \otimes b\big),
\end{align*}
and where $\boldsymbol{\theta}=(\theta_i)_{i=1}^{N+1}$ is the set of transition rates (each rate is set to one in the homogeneous case), $\{h_i\}$ is the set of site-wise generators and:
\[    a=\begin{pmatrix}
    -1 & 0 \\
    1 & 0
    \end{pmatrix}, \quad 
    b=\begin{pmatrix}
    0 & 1 \\
    0 & -1
    \end{pmatrix}, \quad 
    c=\begin{pmatrix}
    0 & 0 & 0 & 0 \\
    0 & 0 & 1 & 0 \\
    0 & 0 & -1 & 0 \\
    0 & 0 & 0 & 0
    \end{pmatrix},
\]
are the local generators.
 Each $h_k$ is a local infinitesimal generator describing a nearest-neighbor or boundary transition.

The local generator matrices are known to satisfy algebraic relations in the partially asymmetric case \cite{schutz2001}, which reduce in the case of the TASEP to:
\begin{equation}
\label{algebraic rel}
\begin{array}{rlr}
h_i h_{i \pm 1} h_i & =0 & \forall i \in\{2, \ldots, N\} \\
h_i^m & =(-1)^{m-1}\theta_i^{m-1} h_i & \forall i \in\{1, \ldots, N+1\} \\
{\left[h_i, h_j\right]} & =0 & \forall i,j \in [N+1], \ \text{s.t. } |i-j|>1.
\end{array}
\end{equation}

Any CTMC can be interpreted as a random walk on a graph. The system Hamiltonian can thus be interpreted as the Laplacian of a graph $G_N$ whose node set is the configuration space $X_N$. A node is connected to another by a directed edge if a single transition between the two configurations is possible, with the corresponding transition rate as edge weight. One then has:
\begin{equation*}
    H=A-D,
\end{equation*}
where $D$ is the out-degree matrix and $A$ is the transition rate matrix. The number of walks from configuration $x$ to configuration $y$ will be denoted $W(x,y)$, and the number of such walks of prescribed length $n$ as $W(x,y,n)$.

The local matrices can be rewritten as:
\begin{equation*}
    a=E_{21}-E_{11}\,, 
    \quad
    h^{(r)}=E_{12}-E_{22}\,, 
    \quad\text{and}\quad
    h= E_{12} \otimes E_{21}-E_{22}\otimes E_{11}\,,
\end{equation*}
where $E_{ij}$ are the standard basis matrices of $\mathbb{R}^{2\times 2}$, satisfying $E_{ij} E_{kl}=\delta_{jk} E_{il}$.

For later use, we denote by
\begin{equation*}
    P(t)\coloneqq e^{Ht}
\end{equation*}
the finite-time transition probability matrix. Its entry \(P_{ij}(t)\) gives the probability of being in configuration indexed by \(i\) at time \(t\) when the process starts from the configuration indexed by \(j\). Equivalently, if \(p(t)\) denotes the column vector of state probabilities, then \(p(t)=P(t)p(0)\).

Since \(P(t)\) is the exponential of the generator, one has

\begin{equation*}
    P(t)=\sum_{n=0}^{\infty}\frac{t^n}{n!}H^n.
\end{equation*}

As the local generators do not all mutually commute, the powers $H^n$ can naturally be seen as a sum over the set of all possible words of length $n$ in the alphabet of the local generators $\Sigma=\{h_1,\cdots,h_{N+1}\}$, which we denote $\Sigma^n$.
From this perspective, a natural problem is to understand which families of words in the local generators contribute to a fixed matrix entry of \(P(t)\), and how those contributions can be grouped according to the partial orders naturally induced by the TASEP dynamics. This is the point of view adopted in Section 3.2, where the entries of \(P(t)\) will be written in terms of exponential generating functions associated with suitable diagrams.

\subsection{ Standard Young tableaux of (non-)classical shape}
\label{comb_prel}
We start by recalling some notions concerning partially ordered sets (posets) and we refer the reader to \cite{EnumComb2} for further information.

A poset is an object of the form $(P,\preceq)$, where $P$ is a set and $\preceq$ a partial order relation on $P$. We will by abuse of notation refer to both the poset and its underlying set as $P$. Let $|P|=n$, an order preserving bijection of the form $f\colon P\to[n]$ is called a linear extension of $P$ and we call $e(P)$ the number of said linear extensions. 

We now give the definition of a Young diagram $D$ and its associated set of standard Young tableaux $SYT(D)$.
We call a diagram any subset $D\subset\mathbb{Z}^2$ endowed with the natural partial order:
\begin{equation}
\label{partial}
(i, j) \preceq\left(i^{\prime}, j^{\prime}\right) \Longleftrightarrow i \leq i^{\prime} \text { and } j \leq j^{\prime}.
\end{equation}
Points in a diagram are referred to as cells, while a cell that is at the bottom of its column and is the rightmost of its row is said to be a corner.
Then, a standard Young tableau is an order preserving bijection $T\colon D\to [n]$, where $n=|D|$, that is:
\[
    c \leq_D c^{\prime} \Longrightarrow T(c) \leq T\left(c^{\prime}\right).
\]

Standard Young tableaux are invariant up to rigid translations in $\mathbb{Z}^2$ of the corresponding diagram, thus a common abuse of notation is to consider a diagram as an equivalence class  rather than a subset of $\mathbb{Z}^2$ .
A standard Young tableau can be visualized as a "filling" of the diagram $D$, where each element in $D$ is viewed as a cell containing a number from $1$ to $|D|$ such that the numbers increase column-wise 
and row-wise. We refer to the set of Young tableaux associated with $D$ as $SYT(D)$ and its cardinality as $f^D=|SYT(D)|$.
It is clear that a Young diagram can be understood as a poset and $f^D$ as the number of its linear extensions. \\

The enumeration of $f^D$ for certain families of $D$ is a classic problem in enumerative combinatorics and an active area of research \cite{EnumTableaux}.

We introduce a family of diagrams formed by $n$ shifted strips of length $N$ as the set of diagrams of the form:
\[
S_{N,n}=\bigcup_{j \in [n]}\left(X+j\Delta\right),
\]
where $n\in\mathbb{N}$, $\Delta=(1,-1)$ and $X=\left\{(0,0),(0,1),\cdots,(0,N)\right\}$. These diagrams and the enumeration of $f^{S_{N,n}}$ have been of recent interest in the combinatorial literature, where it was proven that the sequence $a_n=f^{S_{N,n}}$ obeys a constant coefficient recurrence relation for all $N\in\mathbb{N}$. This was first proven for $N$ up to $5$ through integral methods in \cite{Sun2015EnumerationOS}, then for all $N$ in \cite{VeraLopez2017}. The latter proof is constructive, in the sense that it provides a direct construction of the recurrence polynomial. 

We are also interested in the infinite shifted tableaux of length $N$:
\begin{equation}
\label{inifinite shape}
Z_{N}=\bigcup_{j \in \mathbb{N}}\left(X+j\Delta\right).
\end{equation}
We now define the set $D_N$ of sub-diagrams of $Z_N$ obeying the following conditions:
\begin{itemize}
    \item if $D\in\ D_N$ then $(0,0)\in D$;
    \item if $D\in D_N$ then $D$ is row and column justified;
    \item if $D\in D_N$ then $D$ is path connected \textit{and} line-convex.
\end{itemize}
The \textit{Young graph} $\mathrm{Y}$ associated with $Z_{N}$ is the graph having node set $D_N$, where a node $x$ is connected by an edge to a node $y$ whenever it is possible to obtain $y$ from $x$ by adding a corner to it. 

A diagram $D\in D_{N+1}$ can be uniquely determined by its cardinality and by the path in $\mathbb{Z}^2$ defined by starting in its bottom-left cell and moving rightward whenever possible and upwards otherwise, until neither is possible. We call such a path the outline of $D$, written $h(D)$, and, noting it must consist of $N-1$ steps, we identify it with an element of $\{\uparrow,\to\}^{N-1}$.

We finally define a set of diagrams which will be useful later on:
\[
D^N_{\eta,\xi}=\left\{D\subset Z_N \ | \ D=A\setminus B, \text{ where }A,B\in D_N, \ B\subseteq A, \ h(A)=\eta, \ h(B)=\xi   \right\}.
\]
The set $D^N_{\eta,\xi}$ contains all diagrams "delimited" by the outlines $\eta, \ \xi$, which differ from one another by their size, as can be seen in Fig. \ref{Fig: diagrams}.

The results concerning shifted strips have recently been extended to diagrams of shape $D=D_1\setminus D_2$ when $h(D_1)=h(D_2)$. In that case the outline defines a path-connected diagram $Y$ containing no $2\times2$ squares, known in the literature as a zigzag shape \cite{EnumTableaux}.
It is easy to see then that $D$ can be written, for some $n$, as:
\[
D_{X,n}=\bigcup_{j \in [n]}\left(X+j\Delta\right),
\]
where $X$ is a zigzag shape and $\Delta$ is as above.

In \cite{periodiPPartitions} the results presented for the case of shifted strips were extended to standard tableaux of the form $D_{X,n}$, which we will refer to as shifted shapes. In particular it was shown that the sequence $a_n=f^{D_{X,n}}$ obeys a constant coefficient recurrence relation and its asymptotic behavior was described.

We now construct, for any Young diagram $D$, an integer sequence in a sense generalizing the number $f^D$. Let $n\in\mathbb{N}$ and consider a surjective labeling function $\ell\colon [n]\to D$. Let $F(D,n)$ be the set of all such labelings such that:
\begin{equation}
    \forall i,j\in[n ],\quad i<j\Rightarrow\ell(i)\preceq\ell(j)\text{ or}\ \ell(i)\text{ and } \ell(j)\text{ are not comparable},
\end{equation}
where $\preceq$ is the partial order defined in (\ref{partial}).
We define:
\begin{equation}
    \label{contatore parole}
    f^D(n)=|F(D,n)|
\end{equation}
where we remark that $f^D(|D|)=f^D$, while $f^D(n)=0$ if and only if $n<|D|$. 
Equivalently, \(f^D(n)\) counts the number of sequences of elements in $D$ of length $n$, allowing for repetitions, such that for every pair of distinct comparable elements $c,\,d\in D$ such that $c\preceq d$, then every instance of $c$ precedes any instance of $d$.


In this sense, the numbers \(f^D(n)\) extend the usual tableaux enumeration \(f^D\). In fact, when \(n=|D|\) one recovers standard Young tableaux, while for \(n>|D|\) repetitions are allowed in a way compatible with the order structure of the diagram.

Finally, we define the exponential generating function of $f^D(n)$ as:
\begin{equation}
\label{egf}
\mathcal{F}^D(t)=\sum_{n=0}^\infty f^D(n)\frac{t^n}{n!}.
\end{equation}

We note that $\mathcal{F}^D(z)$, with $z\in\mathbb{C}\,$ is not only a formal power series but an analytic function in the entirety of $\mathbb{C}$. To note this let $k\coloneq|D|$ and observe that:
\[
\lim_{n\to\infty} \left|\frac{f^D(n)}{n!}\right|^{1/n}\leq\lim_{n\to\infty} \left|\frac{n^k}{n!}\right|^{1/n}=0
\]
where the first inequality is due to the fact that $n^k$ counts all possible strings of length $n$ in $k$ letters with no restriction on their relative order and thus is an overestimate of $f^D(n)$. It follows that the first limit is equal to zero and, thanks to the Cauchy-Hadamard theorem, the radius of convergence is infinite.

\begin{figure}
    \centering
    
\begin{tikzpicture}[x=0.55cm,y=0.55cm,>=Latex,line join=round,line cap=round]

\tikzset{
  cell/.style={draw=black!55, line width=0.35pt},
  patharrow/.style={-{Latex[length=2.2mm,width=1.8mm]}, line width=0.9pt},
  dotstyle/.style={fill=black,draw=none},
}

\colorlet{panelcyan}{cyan!14}
\colorlet{fillA}{blue!60!cyan!30}

\def\RowLenA{20}
\def\RowLenB{20}
\def\RowLenC{20}
\def\RowLenD{20}
\def\RowLenE{20}

\def\ShiftA{0}
\def\ShiftB{1}
\def\ShiftC{2}
\def\ShiftD{3}
\def\ShiftE{4}

\begin{scope}[shift={(0,-0.5)}]

\node[anchor=east,font=\fontsize{11}{11}\selectfont] at (-0.8,0) {a)};

\foreach \x in {0,...,12}{
  \fill[panelcyan] (\x,0) rectangle ++(1,-1);
}
\foreach \x in {1,...,13}{
  \fill[panelcyan] (\x,-1) rectangle ++(1,-1);
}
\foreach \x in {2,...,13}{
  \fill[panelcyan] (\x,-2) rectangle ++(1,-1);
}
\foreach \x in {3,...,9}{
  \fill[panelcyan] (\x,-3) rectangle ++(1,-1);
}

\foreach \x in {0,...,12}{
  \draw[cell] (\x,0) rectangle ++(1,-1);
}
\foreach \x in {1,...,13}{
  \draw[cell] (\x,-1) rectangle ++(1,-1);
}
\foreach \x in {2,...,14}{
  \draw[cell] (\x,-2) rectangle ++(1,-1);
}
\foreach \x in {3,...,15}{
  \draw[cell] (\x,-3) rectangle ++(1,-1);
}
\foreach \x in {4,...,16}{
  \draw[cell] (\x,-4) rectangle ++(1,-1);
}

\foreach \x in {3.5,4.5,5.5,6.5,7.5,8.5}{
  \draw[patharrow] (\x,-3.5)--++(1.1,0);
}

\draw[patharrow] (9.5,-3.5)--++(0,1.2);

\foreach \x in {9.5,10.5,11.5,12.5}{
  \draw[patharrow] (\x,-2.5)--++(1.1,0);
}
\draw[patharrow] (13.5,-2.5)--++(0,1.2);

\node[anchor=west,font=\Large] at (14.2,-1.0) {$\xi$};

\fill[dotstyle] (5.4,-5.4) circle (1.1pt);
\fill[dotstyle] (5.7,-5.7) circle (1.1pt);
\fill[dotstyle] (6.0,-6.0) circle (1.1pt);
\fill[dotstyle] (10.4,-5.4) circle (1.1pt);
\fill[dotstyle] (10.7,-5.7) circle (1.1pt);
\fill[dotstyle] (11.0,-6.0) circle (1.1pt);
\fill[dotstyle] (17.4,-5.4) circle (1.1pt);
\fill[dotstyle] (17.7,-5.7) circle (1.1pt);
\fill[dotstyle] (18.0,-6.0) circle (1.1pt);

\end{scope}

\begin{scope}[shift={(0,-8.8)}]

\node[anchor=east,font=\fontsize{11}{11}\selectfont] at (-0.8,0) {b)};

\foreach \x in {6,...,13}{
  \fill[panelcyan] (\x,-1) rectangle ++(1,-1);
}
\foreach \x in {3,...,13}{
  \fill[panelcyan] (\x,-2) rectangle ++(1,-1);
}
\fill[fillA] (14,-2) rectangle ++(1,-1);
\foreach \x in {3,...,9}{
  \fill[panelcyan] (\x,-3) rectangle ++(1,-1);
}
\foreach \x in {10,...,15}{
  \fill[fillA] (\x,-3) rectangle ++(1,-1);
}
\foreach \x in {4,...,15}{
  \fill[fillA] (\x,-4) rectangle ++(1,-1);
}
\foreach \x in {5,...,11}{
  \fill[fillA] (\x,-5) rectangle ++(1,-1);
}
\foreach \x in {0,...,12}{
  \draw[cell] (\x,0) rectangle ++(1,-1);
}
\foreach \x in {1,...,13}{
  \draw[cell] (\x,-1) rectangle ++(1,-1);
}
\foreach \x in {2,...,14}{
  \draw[cell] (\x,-2) rectangle ++(1,-1);
}
\foreach \x in {3,...,15}{
  \draw[cell] (\x,-3) rectangle ++(1,-1);
}
\foreach \x in {4,...,16}{
  \draw[cell] (\x,-4) rectangle ++(1,-1);
}
\foreach \x in {5,...,17}{
  \draw[cell] (\x,-5) rectangle ++(1,-1);
}
\foreach \x in {3.5,4.5,5.5,6.5,7.5,8.5}{
  \draw[patharrow] (\x,-3.5)--++(1.1,0);
}

\draw[patharrow] (9.5,-3.5)--++(0,1.2);

\foreach \x in {9.5,10.5,11.5,12.5}{
  \draw[patharrow] (\x,-2.5)--++(1.1,0);
}
\draw[patharrow] (13.5,-2.5)--++(0,1.2);

\node[anchor=west,font=\Large] at (14.2,-1.0) {$\xi$};

\foreach \x in {-2.5}{
  \draw[patharrow] (2.5,\x)--++(0,1.2);
}

\foreach \x in {2.5,3.5,4.5}{
  \draw[patharrow] (\x,-1.5)--++(1.1,0);
}
\draw[patharrow] (5.45,-1.5)--++(0,1.1);
\foreach \x in {5.5,6.5,7.5,8.5,9.5,10.5,11.5}{
  \draw[patharrow] (\x,-0.5)--++(1.1,0);
}

\node[anchor=west,font=\Large] at (13.1,0.3) {$\eta$};

\foreach \x in {5.5,6.5,7.5,8.5,9.5,10.5}{
  \draw[patharrow] (\x,-5.5)--++(1.1,0);
}

\draw[patharrow] (11.5,-5.5)--++(0,1.2);

\foreach \x in {11.5,12.5,13.5,14.5}{
  \draw[patharrow] (\x,-4.5)--++(1.1,0);
}
\draw[patharrow] (15.5,-4.5)--++(0,1.2);

\node[anchor=west,font=\Large] at (16.2,-3.0) {$\xi$};

\fill[dotstyle] (6.4,-6.4) circle (1.1pt);
\fill[dotstyle] (6.7,-6.7) circle (1.1pt);
\fill[dotstyle] (7.0,-7.0) circle (1.1pt);
\fill[dotstyle] (11.4,-6.4) circle (1.1pt);
\fill[dotstyle] (11.7,-6.7) circle (1.1pt);
\fill[dotstyle] (12.0,-7.0) circle (1.1pt);
\fill[dotstyle] (18.4,-6.4) circle (1.1pt);
\fill[dotstyle] (18.7,-6.7) circle (1.1pt);
\fill[dotstyle] (19.0,-7.0) circle (1.1pt);

\end{scope}

\end{tikzpicture}
\caption{Panel a) shows in white and light blue the infinite diagram $Z_{13}$, while in blue only a sub diagram having for outline the path $\xi$. Panel b) shows two examples (one in light blue only and one in both light and dark blue) of diagrams in $D^{13}_{\eta,\xi}$.}
    \label{Fig: diagrams}
\end{figure}

\section{Results}
\subsection{Walks in the TASEP state space}
In this section we map the problem of the enumeration of walks between two configurations in the open TASEP to that of the enumeration of standard Young tableaux for certain non-classical shapes. Surprisingly, the diagrams in question have been of recent interest in the combinatorial literature \cite{VeraLopez2017,EnumTableaux,periodiPPartitions}. This section extends to the open case the argument introduced in \cite{elizalde2022walks} linking the periodic TASEP and cylindric tableaux.

To do so we will work with the length $N$ open TASEP state space $X_N$, the set of length $N$ outlines $\{\uparrow,\to\}^N$ and finally the set of diagrams $D_{N+1}$. We introduce $h\colon D_{N+1}\to\{\uparrow,\to\}^N$ which maps a diagram to its outline, $g\colon \{\uparrow,\to\}^N\to X_N$ such that each upward step is mapped to an occupied site and each rightward step to an  empty one and $f=g \ \circ h\colon D_{N+1}\to X_N$.

Finally, given a standard Young tableau $T\in SYT(D)$, for any $k<|D|$ we define the subdiagram $T^k\subset D$ as $T^k=\cup_{i=1}^k T^{-1}(i)$.

\begin{theorem}
    Let $x,y\in X_N$ and $D_1, \ D_2$ be any pair of diagrams in $ D_{N+1}$ such that $D_1\subset D_2$, $h(D_1)=g^{-1}(x)$, $h(D_2)=g^{-1}(y)$ and $n=\left|D_2\setminus D_1\right|$. 
    If $\ W(x,y,n)$ is the 
    number of all $n$-step walks from $x$ to $y$, then:
    \[
    W(x,y,n)=|{SYT(D_2\setminus D_1})|.
    \]
     
\end{theorem}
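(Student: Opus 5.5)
The plan is to build an explicit bijection between $n$-step walks from $x$ to $y$ and standard Young tableaux of the skew shape $D_2\setminus D_1$. This follows the argument of \cite{elizalde2022walks} for the periodic TASEP and cylindric tableaux. The central idea is that a TASEP configuration is the outline of a diagram in $D_{N+1}$, and a single TASEP transition is the addition of one corner to that diagram. A walk then corresponds to a maximal chain in the Young graph $\mathrm{Y}$ from $D_1$ to $D_2$. Such chains are in bijection with standard fillings of $D_2\setminus D_1$ via $T\mapsto (D_1\cup T^k)_{k=0}^{n}$.

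\textbf{Step 1: local lemma.} Let $E\in D_{N+1}$ with $f(E)=z$. I will show that the diagrams $E'\in D_{N+1}$ obtained from $E$ by adding one corner correspond bijectively to the configurations $z'$ reachable from $z$ in one step, with $f(E')=z'$. To prove this, read the outline $h(E)\in\{\uparrow,\to\}^N$. An addable cell sits at a concave place of the boundary. In the interior this is a consecutive pattern $\uparrow\to$ at positions $(k,k+1)$, and adding the cell turns it into $\to\uparrow$. Under $g$ this is $10\mapsto 01$, i.e. a hop governed by $h_{k+1}$. The two ends of the outline need separate treatment, because of the shifted geometry of $Z_{N+1}$ (each new row is shifted by $\Delta$). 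At the left end, adding a cell at the start of the lowest partial row changes a first step $\to$ into $\uparrow$; this is an entry, $0\mapsto1$ at site 1, via $h_1$. At the right end, adding a cell changes a last step $\uparrow$ into $\to$; this is an exit via $h_{N+1}$. In each case I must check that the enlarged set is still row/column justified, connected and line-convex, and that no other corners exist. The convexity and justification conditions defining $D_N$ are exactly what exclude extra addable cells. Together these give the bijection.

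\textbf{Step 2: walks to chains.} By Step 1 and induction on $n$, each $n$-step walk $x=z_0\to z_1\to\dots\to z_n$ lifts uniquely to a chain $D_1=E_0\subset E_1\subset\dots\subset E_n$ in $\mathrm{Y}$ with $f(E_i)=z_i$. Each step adds exactly one cell, so $|E_n|=|D_1|+n$. We have $f(E_n)=y$, and a diagram in $D_{N+1}$ is determined by its outline and cardinality. Since the hypotheses fix $h(D_2)=g^{-1}(y)$ and $|D_2|=|D_1|+n$, it follows that $E_n=D_2$. Conversely, any saturated chain from $D_1$ to $D_2$ projects under $f$ to a walk from $x$ to $y$, and the two maps are mutually inverse.

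\textbf{Step 3: chains to tableaux.} Given a chain, define $T(c)=i$ for the unique $c\in E_i\setminus E_{i-1}$. This filling is order preserving on $D_2\setminus D_1$, because each added cell is a corner, so every cell below it or to its left lies either in $D_1$ or in an earlier $E_j$. Conversely, for $T\in SYT(D_2\setminus D_1)$ the sets $D_1\cup T^k$ are order ideals of $D_2$ containing $D_1$. I must check that they lie in $D_{N+1}$, which should follow since order ideals between two such diagrams inherit justification and convexity.

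\textbf{Main obstacle.} I expect the main difficulty to be the boundary cases of Step 1, together with this last membership check. This is where the open geometry of $Z_{N+1}$, rather than the cylindric one of \cite{elizalde2022walks}, matters: I need the entry and exit moves to correspond exactly to the extreme addable cells created by the $\Delta$-shifts. I would first verify these cases on small $N$ against Fig.~\ref{Fig: diagrams}, then argue generally from the outline description.
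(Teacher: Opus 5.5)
Your proposal is correct and follows the same route as the paper, which also shows that $f=g\circ h$ matches corner additions at $D$ bijectively with TASEP moves out of $f(D)$ (interior $\uparrow\to$ swaps plus the two boundary cases), lifts walks to saturated chains from $D_1$ to $D_2$, and reads off standard fillings via $D^{(r)}=D_1\cup T^r$. Your explicit use of the fact that a diagram in $D_{N+1}$ is determined by its outline and cardinality, to force the lifted chain to end at $D_2$, makes precise a step the paper leaves implicit.
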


\begin{proof}
    Let \(\mathrm{Y}_{N+1}\) be the directed Young graph whose vertices are the diagrams in \(D_{N+1}\), with an edge \(D\to D'\), with $D,\,D'\in D_N$, whenever \(D'\) is obtained from \(D\) by adding a corner. Let \(\mathrm{X}_N\) be the directed TASEP state graph with vertex set \(X_N\), where \(x\to y\) whenever \(y\) is obtained from \(x\) by one valid TASEP move. We claim that the map \(f=g\circ h\colon D_{N+1}\to X_N\) identifies the outgoing edges of each \(D\in D_{N+1}\) bijectively with the outgoing edges of \(f(D)\), that is, it is a covering map for the two graphs. Indeed, adding a corner to \(D\), if the corner does not happen to be the leftmost or rightmost cell in a row of $Z_{N+1}$, corresponds at the level of the outline \(h(D)\in\{\uparrow,\to\}^N\), to replacing a consecutive pattern \((\uparrow,\to)\) with \((\to,\uparrow)\) at a unique position. If the corner does happen to be a leftmost cell then the first step of the outline is changed from a rightward step to an upward one, vice-versa if it is a rightmost cell. Under the map \(g\), this is exactly the local TASEP update in which a particle moves one step to the right into an empty site. Conversely, every allowed TASEP move in \(f(D)\) arises in this way from a unique such exchange in the outline, hence from a unique corner addition in \(D\).

    Since \(f\) preserves directed adjacency and induces a bijection between the outgoing edges at every vertex, every directed walk
    \[
        D_1=D^{(0)}\to D^{(1)}\to\cdots\to D^{(n)}
    \]
    in \(\mathrm{Y}_{N+1}\) projects to a unique directed walk
    \[
        x=f(D_1)\to f(D^{(1)})\to\cdots\to f(D^{(n)})
    \]
    in \(\mathrm{X}_N\). It follows that \(W(x,y,n)\) is equal to the number of directed walks from \(D_1\) to \(D_2\) of length \(n\). Since each edge in \(\mathrm{Y}_{N+1}\) adds exactly one corner, such a walk produces a sequence of diagrams of the form:
    \[
        D_1=D^{(0)}\subset D^{(1)}\subset\cdots\subset D^{(n)}=D_2,
    \]
    with \(|D^{(r)}\setminus D_1|=r\) for all \(r\).

    Given such a sequence, label by \(r\) the unique cell of \(D^{(r)}\setminus D^{(r-1)}\). This produces a filling of \(D_2\setminus D_1\) with the labels \(1,\dots,n\). Because corners can only be added in an order compatible with the partial order of the diagram, the resulting filling is order-preserving, hence a standard Young tableau. Conversely, every tableau \(T\in SYT(D_2\setminus D_1)\) determines a diagram sequence chain by setting
    \[
        D^{(r)}\coloneqq D_1\cup T^r,\qquad r=0,\dots,n,
    \]
    thus the set of walks in $Y_{N+1}$ and $SYT(D_2\setminus D_1)$ are in bijection. 
    Therefore the number of \(n\)-step walks from \(D_1\) to \(D_2\) is \(|SYT(D_2\setminus D_1)|\), and:
    \[
        W(x,y,n)=|SYT(D_2\setminus D_1)|.
    \]
\end{proof}

The previous theorem shows that the recent problem of enumerating the number of Young tableaux for shifted strips and shifted shapes coincides with the enumeration of closed walks in the TASEP with open boundaries, an example of which can be seen in Fig. \ref{fig:placeholder}. In fact, whenever $x=y$ the diagrams in $D^{N+1}_{h(x),h(y)}$ are shifted shapes. In particular, if one considers all closed walks from the configuration associated to the completely empty lattice, they recover the parallelogrammic shapes discussed in \cite{VeraLopez2017,Sun2015EnumerationOS}.

We remark that $f^D$ is invariant under reflections along the coordinate bisectors of the plane, and in particular under the map
\begin{equation}
    \label{symmetry}
    s\colon  (i,j)\mapsto(-j,-i).
\end{equation}
Now consider a diagram $D\in D_{N+1}$, and let $\eta=h(D)$ and $x=g(\eta)\in X_N$. Define $y=g(h(s(D)))\in X_N$ as the configuration encoded by the reflected diagram $s(D)$. A direct check shows that
\begin{equation*}
    y_i=\begin{cases}
        1 & \text{if } x_{N-i+1}=0,\\
        0 & \text{if } x_{N-i+1}=1,
    \end{cases}
\end{equation*}
that is, the invariance of the number of Young tableaux under \eqref{symmetry} is reflected in the well-known particle-hole mapping of the TASEP dynamics.

The results presented in \cite{periodiPPartitions} thus imply the non-trivial fact that the number of closed walks in the TASEP obeys a constant coefficient recurrence relation.

\begin{figure}
    \centering
\begin{tikzpicture}[x=0.55cm,y=0.55cm,>=Latex,line join=round,line cap=round]
\tikzset{
  cell/.style={draw=black!55, line width=0.35pt},
  patharrow/.style={-{Latex[length=2.2mm,width=1.8mm]}, line width=0.9pt},
  dotstyle/.style={fill=black,draw=none},
}
\colorlet{panelcyan}{cyan!14}

\begin{scope}[shift={(-0.5,-1.5)}]

\node[anchor=east,font=\fontsize{11}{11}\selectfont] at (0.5,1) {a)};

\foreach \x in {1.5,3,4.5}{
  \fill[panelcyan] (\x,.5) rectangle ++(1.5,-1.5);
}
\foreach \x in {3,4.5}{
  \fill[panelcyan] (\x,-1) rectangle ++(1.5,-1.5);
}
\foreach \x in {1.5,3,4.5}{
  \draw[cell] (\x,.5) rectangle ++(1.5,-1.5);
}
\foreach \x in {3,4.5,6}{
  \draw[cell] (\x,-1) rectangle ++(1.5,-1.5);
}
\foreach \x in {4.5,6,7.5}{
  \draw[cell] (\x,-2.5) rectangle ++(1.5,-1.5);
}
\foreach \x in {6,7.5,9}{
  \draw[cell] (\x,-4) rectangle ++(1.5,-1.5);
}
\node[anchor=east,font=\fontsize{11}{11}\selectfont] at (7,-1.75) {2};
\node[anchor=east,font=\fontsize{11}{11}\selectfont] at (5.5,-3.25) {1};
\node[anchor=east,font=\fontsize{11}{11}\selectfont] at (7,-3.25) {3};
\node[anchor=east,font=\fontsize{11}{11}\selectfont] at (7,-4.75) {5};
\node[anchor=east,font=\fontsize{11}{11}\selectfont] at (8.5,-3.25) {4};
\node[anchor=east,font=\fontsize{11}{11}\selectfont] at (4
(8.5,-4.75) {6};
\node[anchor=east,font=\fontsize{11}{11}\selectfont] at (10,-4.75) {7};

\end{scope}

\begin{scope}[shift={(8,-4.5)}]

\node[anchor=east,font=\fontsize{11}{11}\selectfont] at (4,4) {b)};

  \def\r{3mm}           
  \def\dx{3.5mm}          
  \def\sh{6.5mm}       

  \tikzset{
    circleNode/.style={draw,circle,inner sep=0, minimum size={2*\r}},
    lab/.style={font=\small}
  }

  \coordinate (L) at (5.5,0);
  \coordinate (T) at ( 8.5,3.5);
  \coordinate (R) at ( 11.5,0);
  \coordinate (B) at ( 8.5,-3.5);

  \node[circleNode,fill=white] (L1) at ($(L)+(-\dx,0)$) {};
  \node[circleNode,fill=white] (L2) at ($(L)+(\dx,0)$) {};

  \node[circleNode,fill=black] (T1) at ($(T)+(-\dx,0)$) {};
  \node[circleNode,fill=white] (T2) at ($(T)+(\dx,0)$) {};

  \node[circleNode,fill=black] (R1) at ($(R)+(-\dx,0)$) {};
  \node[circleNode,fill=black] (R2) at ($(R)+(\dx,0)$) {};

  \node[circleNode,fill=white] (B1) at ($(B)+(-\dx,0)$) {};
  \node[circleNode,fill=black] (B2) at ($(B)+(\dx,0)$) {};

  \draw[->,bend left=20,shorten >=\sh,shorten <=\sh]
    (L) to node[lab,above left]  {} (T);
  \draw[red,->,bend right=20,shorten >=\sh,shorten <=\sh]
    (R) to node[lab,above right] {2, 5}  (T);
  \draw[red,->,bend right=20,shorten >=\sh,shorten <=\sh]
    (B) to node[lab,below right] {1,4} (R);
  \draw[red, ->,bend left=20,shorten >=\sh,shorten <=\sh]
    (B) to node[lab,below left]  {7}  (L);

  \draw[red, ->,shorten >=\sh,shorten <=\sh]
    (T) -- node[lab,right] {3, 6} (B);
\end{scope}

\end{tikzpicture}
    \caption{Panel a) shows a standard Young tableau associated with a diagram of the form $D_2\setminus D_1$, where $D_2$ (represented as blue and white cells) and $D_1$ (represented as only blue cells) are both in $D_3$. Panel b) represents the path in the 2-sites TASEP configuration space in bijection with the standard Young tableaux of panel a).}
    \label{fig:placeholder}
\end{figure}

\subsection{The transition probability matrix}

The goal of the following subsection is to provide an order theoretic interpretation of the transition probability matrix $P(t)=e^{Ht}$ for the homogeneous TASEP, that is to write its entries as the sum of the exponential generating functions of certain combinatorial objects.

To do so we will interpret the power $H^n$ as a sum of length $n$ words in the alphabet $\Sigma=\{h_1,\dots,h_{N+1}\}$, whose set we denote as $\Sigma^n$. Thus, we write:
\begin{equation*}
H^n=\left(h_1+\dots+h_{N+1}\right)^n=\sum_{s\in\Sigma^n}s.
\end{equation*}
In this context a word is both an element in $\Sigma^n$ and the corresponding matrix obtained by computing the product it defines. 

The sitewise generators in \(\Sigma\) satisfy the relations \eqref{algebraic rel}. Hence, given a word $s$ in $\Sigma^n$ we call, for some $k\leq n$, $\overline{s}\in\Sigma^k$ the reduced word of $s$ if it constitutes a factorization of minimal length, up to a sign, of $s$. 

A reduced word is not unique, as its factors can commute according to $\eqref{algebraic rel}$. In general, we say that two words $s_1,\,s_2$ in $\Sigma$ are equivalent if it is possible to rewrite one as the other respecting the commutation relation and we write $s_1=_{\Sigma}\, s_2$.

We will now show that it is possible to identify any given equivalence class of words in $\Sigma^n$ with a labeled poset $\boldsymbol{D}=(D,\Sigma,\ell_D)$, where $D$ is a subdiagram of $Z_{N+1}$  defined in \eqref{inifinite shape}, $\Sigma$ is the local generator alphabet and $\ell_D\colon D\to\Sigma$ is the restriction to D of the  labeling function
\begin{equation*}
    \overline{\ell}\colon Z_{N+1}\to\Sigma,
\end{equation*}
defined as
\begin{equation}
    \label{labeling}
    \overline{\ell}\colon (i+k,i)\mapsto h_k.
\end{equation}

Given a reduced word $\overline{s}=s_1\cdots s_k$ in $\Sigma^k$ consider the function $\pi_s\colon [k]\to\Sigma$ such that $\pi_s(i)$ gives the generator appearing in the $i$-th position in $s$. We define $\preceq_s$ as the partial order on $[k]$ such that for all $i,\ j\in[k]$ one has  
\begin{equation*}
    \text{if}\quad[\pi(i),\pi(j)]\neq0 \quad \text{and}\quad j\geq i \quad\text{then}\quad j\preceq_s i \ .
\end{equation*}
The poset $P=\big([k],\preceq_s\big)$ inherits its cover relations from the commutation rules \eqref{algebraic rel}, thus it can be seen that it is in bijection with a sub-diagram of $Z_{N+1}$. Then the word $s$ lies in an equivalence class identified with $\boldsymbol{D}$, as $\boldsymbol{D}$ depends only upon the relative order in which the letters of $s$ appear and their commutation relations.

\begin{figure}
    \centering
\begin{tikzpicture}[x=0.55cm,y=0.55cm,>=Latex,line join=round,line cap=round]
\tikzset{
  cell/.style={draw=black!55, line width=0.35pt},
  patharrow/.style={-{Latex[length=2.2mm,width=1.8mm]}, line width=0.9pt},
  dotstyle/.style={fill=black,draw=none},
}
\begin{scope}[shift={(-0,-0.5)}]

\node[anchor=east,font=\fontsize{11}{11}\selectfont] at (-1,0) {a)};
\node[anchor=east,font=\fontsize{11}{11}\selectfont] at (8,0) {$s\ = \ h_3\,h_3\,h_5\,h_4\,h_2\,h_1\,h_4\,h_3\,h_1$};
\end{scope}
\begin{scope}[shift={(11,-0.5)}]

\node[anchor=east,font=\fontsize{11}{11}\selectfont] at (-0.8,0) {b)};
\node[anchor=east,font=\fontsize{11}{11}\selectfont] at (6,0) {$\overline{s}\ = \ h_3\,h_5\,h_4\,h_2\,h_3\,h_1$};

\end{scope}
\begin{scope}[shift={(6,-2.5)}]

\node[anchor=east,font=\fontsize{11}{11}\selectfont] at (-0,0) {c)};

\foreach \x in {2,3.5,5}{
  \draw[cell] (\x,0) rectangle ++(1.5,-1.5);
}
\foreach \x in {0.5,2,3.5}{
  \draw[cell] (\x,-1.5) rectangle ++(1.5,-1.5);
}
\node[anchor=east,font=\fontsize{11}{11}\selectfont] at (3.5,-0.75) {$h_3$};
\node[anchor=east,font=\fontsize{11}{11}\selectfont] at (5,-0.75) {$h_4$};
\node[anchor=east,font=\fontsize{11}{11}\selectfont] at (6.5,-0.75) {$h_5$};
\node[anchor=east,font=\fontsize{11}{11}\selectfont] at (2,-2.25) {$h_1$};
\node[anchor=east,font=\fontsize{11}{11}\selectfont] at (3.5,-2.25) {$h_2$};
\node[anchor=east,font=\fontsize{11}{11}\selectfont] at (5,-2.25) {$h_3$};

\end{scope}

\end{tikzpicture}

  \caption{Panel a) shows an example of a word $s$ in $\Sigma^9$, where $\Sigma=\{h_i\}_{i=1}^5$, b) one of its possible reduced words and c) the labeled poset associated with the equivalence class of $s$. }
    \label{fig: words-diagram-bijection}
\end{figure}

Finally, given a configuration $x\in {X}_N$, we denote by $\mathcal{N}^+(x)\subset X_N$ the set of configurations obtained by moving, independently and simultaneously, any subset (possibly empty) of particles of $x$ that can jump one site to the right (i.e.\ particles at sites $i$ such that $x_i=1$ and $x_{i+1}=0$). In other words, $y\in\mathcal{N}^+(x)$ if it can be obtained from $x$ by advancing any number of movable particles by one site. Analogously, we define $\mathcal{N}^-(x)\subset X_N$ as the set of configurations obtained by moving, independently and simultaneously, any subset (possibly empty) of particles that can jump one site to the left.

We are now ready to give the following:

\begin{lemma}
Let D be a diagram in $D_{\xi,\eta}^{N+1}$, where $\xi,\eta\in\{\uparrow,\rightarrow\}^N$, be such that there exists $D_0\in D_{\xi,\eta}^{N+1}$ with $D_0\subset D$. Let $s\in\Sigma^n$ be a word in the equivalence class induced by $(D,\Sigma,\ell_D)$, where $\ell_D$ is the restriction of $\overline{\ell}$ to $D$. Set $x=g(\xi)\in X_N$ and $y=g(\eta)\in X_N$. Then
\[
(s)_{ij}=
\begin{cases}
(-1)^{\,n-|D|+\kappa(z,y)}
& \text{if } j=\operatorname{idx}(x)\ \text{and}\ i=\operatorname{idx}(z)\ \text{for some } z\in \mathcal{N}^-(y),\\[4pt]
0 & \text{otherwise},
\end{cases}
\]
where $\kappa(z,y)$ denotes the number of particles that are moved one site to the left in passing from $y$ to $z$, that is, $\kappa(z,y)=\frac12\|y-z\|_1$).
\end{lemma}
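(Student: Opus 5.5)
We must compute the matrix of a word $s$ whose equivalence class corresponds to the labeled diagram $(D,\Sigma,\ell_D)$, with $D=A\setminus B$ having lower outline $\xi$ (from $B$) and upper outline $\eta$ (from $A$). The plan is to reduce $s$ to a normal form and then read the action of each factor on basis vectors.

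\textbf{Step 1: reduce repetitions.} Since $s$ lies in the class of $(D,\Sigma,\ell_D)$, any two occurrences of the same letter $h_k$ that are adjacent in the poset can be brought next to each other using the commutation relations in \eqref{algebraic rel}. The relation $h_k^m=(-1)^{m-1}h_k$ (with all $\theta_i=1$) then collapses them. Doing this repeatedly gives $s=(-1)^{n-|D|}\,\overline{s}$, where $\overline{s}$ is a reduced word containing exactly one letter per cell of $D$. This accounts for the first part of the sign.

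\textbf{Step 2: factor $\overline{s}$ by columns.} Choose a linear extension of $D$ and realise $\overline{s}$ as $w_r\cdots w_1$. Each $w_i=\prod_{c\in R_i}h_{\overline\ell(c)}$ collects the cells of the $i$-th layer, for instance the cells added along one anti-diagonal. All cells in such a layer carry labels differing by at least $2$, so the factors commute and act on disjoint pairs of sites.

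Write each local generator as a jump part minus a diagonal part:
\[
h_k=J_k-Q_k,\qquad J_k=E_{12}\otimes E_{21}\ \text{(bulk)},\qquad Q_k=E_{22}\otimes E_{11},
\]
with the analogous one-site decompositions for $a$ and $b$. Expanding $w_1$ applied to $e_{\operatorname{idx}(x)}$ gives
\[
w_1\,e_{\operatorname{idx}(x)}=\sum_{S\subseteq R_1}(-1)^{|R_1\setminus S|}\,\big(\text{jumps in }S,\ \text{diagonal on }R_1\setminus S\big)\,e_{\operatorname{idx}(x)}.
\]
A term survives only if every site in $R_1$ is a jumpable pair $(1,0)$ in $x$. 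This is exactly the condition that the layer $R_1$ consists of addable corners of $B$, which is guaranteed by the hypothesis that $D_0\subset D$ exists, so the shape is consistent with the outlines.

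The key observation is what happens when a diagonal factor is chosen on a pair, so that the particle does not jump. The pair stays $(1,0)$, and at the next layer the cell stacked on that corner has label $h_{k\pm1}$. That generator acts on an overlapping pair which is then in the wrong state, and it kills the term. The exception is cells on the upper boundary $\eta$, which have no successor in $D$.

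Consequently, by induction over layers, only the all-jump choice survives at every interior cell. The all-jump path carries the outline $\xi$ to the outline $\eta$ via the covering map of Theorem 1, that is, from $x$ to $y$.

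\textbf{Step 3: the last layer.} At the maximal cells, namely the corners of $A$, each particle may independently either jump, contributing $+1$, or stay, contributing $-1$ through $-Q$. Not performing the final jump of a particle means the output configuration $z$ has that particle one site to the left of its position in $y$. Summing over all subsets of the top cells therefore gives a sum over $z\in\mathcal N^-(y)$ with sign $(-1)^{\kappa(z,y)}$. Combining this with Step 1 yields the claimed formula.

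All other entries vanish. Nonzero columns require the initial state $x$ by Step 2, and nonzero rows arise only as above.

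\textbf{Main obstacle.} The delicate point is Step 2's claim that a non-jump at a non-maximal cell always annihilates the term. This needs a careful check of the boundary cells labelled $h_1$ and $h_{N+1}$, where $a=E_{21}-E_{11}$ and $b=E_{12}-E_{22}$ act on a single site, and of the correspondence between the layers $R_i$ and corner additions.

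A second delicate point is that the maximal cells of $D$ correspond exactly to the particles of $y$ that can move left. Equivalently, the corners of $A$ not in $B$ must biject with the $(0,1)$ patterns of $\eta$ produced by the last moves. Here I would use the $\uparrow\to$ versus $\to\uparrow$ pattern exchange from the proof of Theorem 1.
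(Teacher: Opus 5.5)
Your overall route is the paper's: collapse repeated letters to get $(-1)^{n-|D|}\overline{s}$, split each $h_k$ into a jump part and a diagonal part, show that a diagonal choice at a non-maximal cell is killed by its right or lower neighbour, and let the maximal cells choose freely. Your argument for the killing step is more explicit than the paper's.

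The gap is in how the hypothesis $D_0\subset D$ is used. In Step 2 you spend it on the first layer consisting of addable corners of $B$, but that holds for every $D=A\setminus B$. Two claims you state without proof are exactly where the hypothesis is needed, and both fail for the minimal diagram of Lemma 2:
\begin{itemize}
\item ``Nonzero columns require the initial state $x$.'' Step 2 only computes $\overline{s}\,e_{\operatorname{idx}(x)}$. The other columns vanish only if every site is acted on by some letter. For a single $h_1$-cell, $s=a\otimes I^{\otimes(N-1)}$ is nonzero in every column with site $1$ empty.
\item ``The maximal cells, namely the corners of $A$.'' In general the maximal cells of $D$ are only the corners of $A$ not lying in $B$. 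The rows reached then form a proper subset of $\mathcal{N}^-(y)$.
\end{itemize}
As written, your argument cannot distinguish Lemma 1 from Lemma 2.

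To close the gap, note that the hypothesis gives $A_0\in D_{N+1}$ with $h(A_0)=\eta$ and $B\subseteq A_0\subsetneq A$. Since $A$ and $A_0$ have the same outline, $A\setminus A_0\subseteq D$ is a band with the same number $k\ge 1$ of cells of each label. Hence it contains a cell of every label $h_1,\dots,h_{N+1}$, and it contains every corner of $A$.

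The first fact means every site $r$ is touched. The jump and diagonal parts of the first letter touching $r$ demand the same input state at $r$, namely $x_r$. So $\overline{s}\,e_w=0$ for $w\neq x$.

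The second fact means the maximal cells of $D$ are exactly the corners of $A$. By the outline dictionary of Theorem 1, these are in bijection with the left moves available in $y$. This includes the boundary moves (emptying site $1$, filling site $N$), so $\mathcal{N}^-(y)$ has to be read as containing them. This resolves the ``second delicate point'' you left open.

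The paper's proof also uses both facts only implicitly: it reads every Kronecker factor $c_r$ off the two outlines, and it matches corners of $D$ with moves of $y$.
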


\begin{proof}

We decompose each local generator $h_r\in\Sigma$ as the sum of two matrices, one containing its diagonal component and the other its off diagonal one:
\begin{equation}
    \label{expansion}
    h_r=-h_r^{(1)}+h_r^{(2)},
\end{equation}
where for $1<r<N+1$ we have
\[
h_r^{(1)}=I^{\otimes (r-2)}\otimes(E_{22}\otimes E_{11})\otimes I^{\otimes (N-r)},
\qquad
h_r^{(2)}=I^{\otimes (r-2)}\otimes(E_{12}\otimes E_{21})\otimes I^{\otimes (N-r)},
\]
while at the boundaries
\[
h_{1}^{(1)}=E_{11}\otimes I^{\otimes (N-1)},\qquad
h_{1}^{(2)}=E_{12}\otimes I^{\otimes (N-1)},
\]
and
\[
h_{N+1}^{(1)}=I^{\otimes (N-1)}\otimes E_{22},\qquad
h_{N+1}^{(2)}=I^{\otimes (N-1)}\otimes E_{12}.
\]

Let $\overline{s}$ be a reduced word for $s$. Since each cancellation $h_r^2=-h_r$ removes one letter and contributes one factor $-1$, we may write
\[
\overline{s}=(-1)^{n-m}\prod_{q=1}^{m}\overline{s}_q,
\qquad m=|D|.
\]
Expanding every factor according to \eqref{expansion} gives
\begin{equation}
\label{eq:expansion_lemma}
\overline{s}=(-1)^{n-m}\prod_{q=1}^{m}\bigl(\overline{s}_q^{(2)}-\overline{s}_q^{(1)}\bigr).
\end{equation}
Thus $s$ is a signed sum of $2^m$ Kronecker products, one for each choice of diagonal/off-diagonal contribution at every factor of the reduced word.

We claim that a summand in \eqref{eq:expansion_lemma} is nonzero if and only if the position, relative to the partial order induced by $\overline{s}$, of every factor chosen as a diagonal part corresponds to a corner of the diagram $D$. Indeed, if a diagonal term is chosen at a cell that is not a corner, then one of its neighboring factors in the reduced word leads to a null product, and the resulting term in the expansion vanishes. By contrast, at any corner the diagonal choice is compatible with the surrounding factors and therefore may contribute a non-null term.

Consider first the summand obtained by choosing the off-diagonal part at every factor:
\[
s_0\coloneqq (-1)^{n-m}\prod_{q=1}^{m}\overline{s}_q^{(2)}
      \coloneqq(-1)^{n-m}\bigotimes_{r=1}^{N}c_r.
\]
The factor $c_r$ is determined by the $r$-th step of the two outlines $\xi$ and $\eta$:
\begin{equation}
\label{eq:ci_from_outlines}
c_r=
\begin{cases}
E_{11} & \text{if } \xi_r=\eta_r=\rightarrow,\\
E_{22} & \text{if } \xi_r=\eta_r=\uparrow,\\
E_{12} & \text{if } \xi_r=\rightarrow,\ \eta_r=\uparrow,\\
E_{21} & \text{if } \xi_r=\uparrow,\ \eta_r=\rightarrow.
\end{cases}
\end{equation}
On the other hand, the basis matrix $E_{\operatorname{idx}(x),\operatorname{idx}(y)}$ admits the factorization
\begin{equation}
\label{eq:ci_from_configs}
E_{\operatorname{idx}(y),\operatorname{idx}(x)}=\bigotimes_{r=1}^{N}d_r,
\qquad
d_r=
\begin{cases}
E_{11} & \text{if } x_r=y_r=0,\\
E_{22} & \text{if } x_r=y_r=1,\\
E_{12} & \text{if } x_r=0,\ y_r=1,\\
E_{21} & \text{if } x_r=1,\ y_r=0.
\end{cases}
\end{equation}
Since the map $g$ identifies $\uparrow$ with $1$ and $\rightarrow$ with $0$, the descriptions in \eqref{eq:ci_from_outlines} and \eqref{eq:ci_from_configs} coincide. Therefore
\[
s_0=(-1)^{n-m}E_{\operatorname{idx}(y),\ \operatorname{idx}(x)}.
\]

Now choose a nonempty set $C$ of corners of $D$, and in \eqref{eq:expansion_lemma} select the diagonal part exactly at the factors whose relative order is given by the cells in $C$ and call $s_1$ the corresponding term in the expansion \eqref{eq:expansion_lemma}. Because factors whose relative position is given by corners commute with each other, the corresponding summand is again a nonzero Kronecker 
product with a single nonzero entry. For the same reason, the Kronecker factorization of $s_1$ corresponds to that of $s_0$ apart from the factors corresponding to the cells in $C$. Let $i,\,i+1$ be the indices associated to a corner in $C$ and $b_i,\, b_{i+1}$ the respective Kronecker factors of $s_1$, then:
 \begin{equation*}
    b_{i}=\left\{
        \begin{array}{lll}
         E_{12}  &\text{if }  &\eta_{i}=\xi_{i}=\uparrow, \\ 
         E_{11}  &\text{if }  &\eta_{i}=\uparrow, \ \xi_{i}=\rightarrow,
        \end{array}\right.
\end{equation*}
and
\begin{equation*} 
    b_{i+1}=\left\{
        \begin{array}{lll}
         E_{21}  &\text{if }  &\eta_{i+1}=\xi_{i+1}=\rightarrow, \\ 
         E_{22}  &\text{if }  &\eta_{i+1}=\rightarrow, \ \xi_{i+1}=\uparrow,
        \end{array}\right.
    \end{equation*}

Thus, relative to $s_0$, each chosen corner replaces one allowed right jump by a "stay" at the corresponding particle position; on the level of configurations this means that the arrival configuration $y$ is modified by moving that particle one site to the left. Hence the resulting term is
\[
(-1)^{n-m+|C|}E_{\operatorname{idx}(z_C),\operatorname{idx}(x)},
\]
where $z_C\in\mathcal{N}^-(y)$ is obtained from $y$ by moving to the left exactly the particles corresponding to the corners in $C$. Observing that $\kappa(z_C,y)=|C|$, then every nonzero summand of \eqref{eq:expansion_lemma} is of the form
\[
(-1)^{n-|D|+\kappa(z,y)}E_{\operatorname{idx}(z),\operatorname{idx}(x)}
\qquad\text{for some }z\in\mathcal{N}^-(y).
\]

Collecting all summands proves that the only nonzero entries of $s$ occur in column $\operatorname{idx}(x)$ and in the rows indexed by configurations $z\in\mathcal{N}^-(y)$, with coefficient $(-1)^{n-|D|+\kappa(z,y)}$ one recovers the thesis.

\end{proof}

The previous lemma does not consider the case in which $D^{N+1}_{\xi, \eta}$ contains no proper subset of $D$, i.e. $D$ is the diagram of least cardinality. In this case, the paths $\xi,\, \eta$ in $\mathbb{Z}^2$ may partially overlap each other, if viewed as originating from the same point in the plane. Thus, $D$ may not identify a unique couple of outlines, but a set which is uniquely defined on a certain index set $I_D\subset[N]$, while on the complementary set $[N]\setminus I_D$ the two outline are only required to coincide. Considering again the restriction $\ell_D$ of $\overline{\ell}$ on $D$, it is immediate that its image is a proper subset of $\Sigma$, i.e. all words in the equivalence class set by $(D,\Sigma,\ell_D)$ do not contain the entire alphabet $\Sigma$.  The index set can be given as:
\begin{equation}
    \label{index set}
    \forall\,i\in\{2,\cdots,N\},\,i\in I_D\Leftrightarrow\,\exists d\in D \,\text{ s. t. }\,\ell_D(d)\in\{h_{i-1},h_i\}.
\end{equation}

For this reason, for any given configuration $x\in X_N$ and index set $I\in[N]$, we introduce the notion of sub-configuration $x_I$, being the configuration obtained by ignoring the lattice sites not in $I$. We will call $\mathcal{N}^-(x,\,I)$  the set of all other sub-configurations that can be reached by moving to the left into an empty site a particle in $x_i$, without reaching any site indexed by $[N]\setminus I$. 
Finally, we introduce the set of pairs of configurations $A(x,y,I)=\big\{(a,b)\in X_N^2\,\big|a_I=x_I, \ b_I=y_I, \ a_{[N]\setminus I}=b_{[N]\setminus I}\big\}$. We are now ready to give the following.
\begin{lemma}
    Let $D\in D^{N+1}_{\xi,\eta}$ be such that it does not admit any proper subset in $ D^{N+1}_{\xi,\eta}$, call
    $x=g(\xi)$ and $y=g(\eta)$.
    Let $s\in\Sigma^n$ be a word in the equivalence class induced by $(D,\Sigma,\ell_D)$ and $I_D\subset [N]$ the index set defined in \eqref{index set}.
    Then
    \[
    (s)_{ij}=
    \begin{cases}
    (-1)^{n-|D|-\kappa(q,z)}
    & \text{if } \ j=\operatorname{idx}(w) \text{, } \ i=\operatorname{idx}(z),\\
    0 & \text{otherwise},
    \end{cases}
    \]
for some $(w,q)\in A(x,y,I_D)$ and  $z\in \mathcal{N}^-(q,I_D)$, where  $\kappa(q,z)=\frac12\|z-q\|_1$. 
\end{lemma}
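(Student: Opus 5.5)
The plan is to reduce to the previous lemma by splitting the lattice into the sites indexed by $I_D$, where the word acts nontrivially, and the remaining sites, where it acts as the identity. First I would show from \eqref{index set} and the definition of $\overline{\ell}$ that every letter appearing in $s$ is some $h_r$ whose Kronecker support lies in the sites $\{r-1,r\}\cap[N]$, and that all of these sites belong to $I_D$. Consequently, after a permutation of tensor factors, every letter of $s$ has the form $\tilde h_r\otimes I^{\otimes(N-|I_D|)}$, with the identity acting on the sites of $[N]\setminus I_D$. The same factorization then holds for $s$ and for its reduced word $\overline{s}$. As in the proof of the previous lemma, the reduction $h_r^2=-h_r$ gives $\overline{s}=(-1)^{n-|D|}\prod_q\overline{s}_q$, so the global sign $(-1)^{n-|D|}$ is extracted immediately.

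Next I would apply the argument of the previous lemma to the restricted word $\tilde s$ acting on $\{0,1\}^{I_D}$. I would expand each factor as $\overline{s}_q^{(2)}-\overline{s}_q^{(1)}$ according to \eqref{expansion}, and then make two observations. The all-off-diagonal summand equals $E_{\operatorname{idx}(y_{I}),\operatorname{idx}(x_{I})}$ restricted to $I_D$, via the comparison of \eqref{eq:ci_from_outlines} with \eqref{eq:ci_from_configs}. The nonvanishing summands with diagonal choices correspond to sets $C$ of corners of $D$, and each such choice moves one particle back one site. Here minimality of $D$ matters: $D$ has no proper subset in $D^{N+1}_{\xi,\eta}$, so $\xi$ and $\eta$ restricted to $I_D$ are the genuine delimiting outlines and $D$ is the unique diagram between them. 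The left moves produced by corners therefore stay inside $I_D$, and the reachable sub-configurations are exactly $\mathcal{N}^-(y,I_D)$. Each corner contributes a sign $-1$, which gives the factor $(-1)^{\kappa}$. Since $(-1)^{\kappa}=(-1)^{-\kappa}$, this matches the sign in the statement.

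Finally I would reassemble the tensor product. The identity on the sites of $[N]\setminus I_D$ equals $\sum_{u}E_{\operatorname{idx}(u),\operatorname{idx}(u)}$ over sub-configurations $u$ of those sites. Tensoring, the nonzero entries of $s$ occur exactly at columns $\operatorname{idx}(w)$ and rows $\operatorname{idx}(z)$ with the following properties: $w_{I_D}=x_{I_D}$; $z$ agrees with $w$ off $I_D$; and $z_{I_D}$ is obtained from $q_{I_D}=y_{I_D}$ by left moves within $I_D$. This is precisely the condition $(w,q)\in A(x,y,I_D)$ together with $z\in\mathcal{N}^-(q,I_D)$.

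The main obstacle is the corner claim on the restricted lattice. In the previous lemma it was asserted rather than proved, namely that a diagonal choice at a non-corner cell kills the product. Here I would verify it directly through local products such as $h^{(1)}_{r}h^{(2)}_{r\pm1}$ and $E_{22}E_{12}=0$, checking that the two cell neighbours of a non-corner force a vanishing product. I would also have to handle the boundary cells of $I_D$, where the outlines overlap, and confirm that no left move leaves $I_D$ there.
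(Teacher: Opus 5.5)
Your proposal follows the paper's own proof: extract $(-1)^{n-|D|}$ from the reductions $h_r^2=-h_r$, observe that sites outside $I_D$ carry only identity factors (which is what produces the pairs $(w,q)\in A(x,y,I_D)$), and rerun the diagonal/off-diagonal corner analysis of the previous lemma on the sites of $I_D$, with $(-1)^{\kappa}=(-1)^{-\kappa}$ reconciling the sign; factoring $s=\tilde s\otimes I$ up front and checking that diagonal choices at non-corners vanish are just more careful versions of steps the paper only asserts. Two points, left unjustified in the paper as well, need tightening: your first step only works if $I_D$ is read as $\{i\in[N]: h_i \text{ or } h_{i+1}\text{ occurs}\}$, which is what the paper's proof actually uses since the displayed \eqref{index set} is shifted by one; and the claim that every $z\in\mathcal{N}^-(q,I_D)$ arises from a set of corners follows not from minimality but from particle conservation, because if $h_k$ does not occur while $q_{k-1}=0$ and $q_k=1$, then site $k-1$ can only gain and site $k$ can only lose particles, so $h_{k-1}$ and $h_{k+1}$ are absent too and $k-1,k\notin I_D$ (with the analogous argument at the two boundary bonds).
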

\begin{proof}

Let \(m=|D|\),  and let \(\overline{s}\) be a reduced word for \(s\). As in the previous lemma, each cancellation removes one letter and contributes a factor \(-1\), so that
\[
\overline{s}=(-1)^{n-m}\prod_{q=1}^{m}\overline{s}_q .
\]
Expanding each factor according to \eqref{expansion}, we obtain
\[
\overline{s}=(-1)^{n-m}\prod_{q=1}^{m}\bigl(\overline{s}_q^{(2)}-\overline{s}_q^{(1)}\bigr),
\]
hence \(s\) can be written as a signed sum of terms, each corresponding to a choice of diagonal or off-diagonal contribution for every factor of the reduced word.

The proof now follows as in the previous lemma. The only difference is that, since the image of \(\ell_D\) is a proper subset of \(\Sigma\), there exist indices \(r\in[N]\setminus I_D\) for which neither \(h_r\) nor \(h_{r+1}\) appears in \(\overline{s}\). For such indices, no constraint is imposed by the diagram \(D\), and therefore the corresponding Kronecker factors in every summand reduce to the identity matrix \(I=E_{11}+E_{22}\).

Thus, any nonzero summand in the expansion of \(s\) can be written as a Kronecker product
\[
S=(-1)^{n-m+\varepsilon(S)}\bigotimes_{r=1}^{N}c_r,
\]
where \(\varepsilon(S)\) is the number of diagonal choices. For indices \(r\in I_D\), the factors \(c_r\) are exactly as in the previous lemma and are determined by the pair of outlines \(\xi,\eta\), hence by the configurations \(x=g(\xi)\) and \(y=g(\eta)\). For indices \(r\notin I_D\), the factors \(c_r\) are either \(E_{11}\) or \(E_{22}\), and therefore enforce that the corresponding sites of the initial and final configurations coincide. This follows from the identification established in the previous lemma, where the matrices \(E_{11}\) and \(E_{22}\) correspond respectively to the cases in which both outlines have the same step at position \(r\), namely \((\rightarrow,\rightarrow)\) and \((\uparrow,\uparrow)\). Under the map \(g\), these correspond to having equal occupation variables at site \(r\) in both configurations, thus no change occurs at that site.

It follows that any nonzero contribution to \(s\) must be supported on pairs of configurations \((w,q)\) such that \((w,q)\in A(x,y,I_D)\), that is, \(w\) and \(q\) agree outside \(I_D\), while their restriction to \(I_D\) coincides with \(x\) and \(y\), respectively. On the sites indexed by \(I_D\), the same reasoning as in the previous lemma applies: the term obtained by selecting all off-diagonal contributions produces a matrix unit
\[
(-1)^{n-m}E_{\operatorname{idx}(q),\operatorname{idx}(w)},
\]
while each admissible diagonal choice corresponds to replacing one allowed right jump by a stay, which amounts to moving one particle one site to the left without leaving the index set \(I_D\).

Therefore, starting from a configuration \(q\) such that \((w,q)\in A(x,y,I_D)\), any sequence of admissible diagonal choices produces a configuration \(z\in\mathcal{N}^-(q,I_D)\), and the corresponding summand is
\[
(-1)^{n-m+\kappa(q,z)}E_{\operatorname{idx}(z),\operatorname{idx}(w)},
\]
where \(\kappa(q,z)=\frac12\|q-z\|_1\). Conversely, every configuration \(z\in\mathcal{N}^-(q,I_D)\) arises in this way.

Collecting all contributions, we conclude that the only nonzero entries of \(s\) are those for which
\[
j=\operatorname{idx}(w),\qquad i=\operatorname{idx}(z),
\]
with \((w,q)\in A(x,y,I_D)\) and \(z\in\mathcal{N}^-(q,I_D)\), and in that case the coefficient is
\[
(-1)^{n-|D|-\kappa(q,z)}
\]
which gives the thesis.

\end{proof}
With the previous two lemmas we are able to compute the contribution to the transition probability matrix of any term $s\in\Sigma^n$ belonging to an equivalence class identified by a labeled poset $(D,\Sigma,\ell_D)$, where $D$ is a diagram in $D^{N+1}_{\xi,\eta}$ for some paths $\eta,\,\xi\in\{\rightarrow,\uparrow\}^N$ and $\ell_D$ is defined as the restriction over $D$ of \eqref{labeling}. In the following we show how such elements constitute all the non zero terms in the expansion of $H_N^n$ for any $n$, and we provide an order theoretic description of the transition probability matrix. 

\begin{theorem}
    \label{prob matrix theorem}
    Let $P(t)=e^{H_N\,t}\in\mathbb{R}^{2^N\times 2^N}$ be the transition probability matrix of the N-sites open TASEP, let $x,y\in X_N$ two configurations, and set $i=\operatorname{idx}(y)$ and $j=\operatorname{idx}(x)$. Then
    \[\big(P(t)\big)_{ij}=\sum_{z\in\mathcal{N}^+(y)}\sum_{\ \ \ D\in D^{N+1}_{\xi,\eta}} (-1)^{|D|+\kappa(y,z)}\mathcal{F}^D(-t),
    \quad \text{for every $i,j\in [2^N]$,}
    \]
    where $\mathcal{F}^{D}(t)$ is the exponential generating function defined in \eqref{egf}, $\eta=g^{-1}(z)$, $\xi=g^{-1}(x)$, and $\kappa(w,z)=\frac12\|z-w\|_1$\,.
\end{theorem}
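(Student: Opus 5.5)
The plan is to expand $P(t)=\sum_n \frac{t^n}{n!}H^n$ with $H^n=\sum_{s\in\Sigma^n}s$, and to regroup the words by their equivalence class under the relations \eqref{algebraic rel}, i.e.\ by the labeled poset $(D,\Sigma,\ell_D)$ attached to each class. We then read off the $(i,j)$ entry of each class using the two preceding lemmas.

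The first step is a classification of words. By the relation $h_ih_{i\pm1}h_i=0$, any word whose poset of letters is not a sub-diagram of $Z_{N+1}$ obeying the line-convexity constraints vanishes. The idempotent-type relation $h_i^m=(-1)^{m-1}h_i$ collapses repeated comparable-free occurrences. I would check that the surviving reduced words are exactly those whose poset is some $D=A\setminus B$ with $A,B\in D_{N+1}$, so $D\in D^{N+1}_{\xi',\eta'}$ for suitable outlines. The next point is the count of unreduced words. The words $s\in\Sigma^n$ that reduce to a given class $D$ are precisely the sequences of length $n$ of cells of $D$, with repetitions, in which every occurrence of $c$ precedes every occurrence of $d$ whenever $c\preceq d$. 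A repeated letter can be merged only across commuting letters, which is exactly the condition defining $F(D,n)$. Hence the number of such words is $f^D(n)$. I expect this identification to be the main obstacle. One must verify that two occurrences of the same label $h_k$ separated by a noncommuting letter either kill the word or correspond to two distinct cells of $Z_{N+1}$ with the same label (cells $(i+k,i)$ for different $i$). Distinct cells carrying the same label must not be confused with a repetition of one cell. I would handle this by building the diagram cell by cell, letter by letter, in the same way as the walk–tableau correspondence of Theorem 1.

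With this in hand, every word in class $D$ of length $n$ contributes, by the first lemma, $(-1)^{n-|D|+\kappa(z,y)}$ to entry $(\operatorname{idx}(z),\operatorname{idx}(x))$ for $z\in\mathcal{N}^-(y)$, where $y=g(\eta)$. The second lemma handles the minimal diagrams with the analogous $A(x,y,I_D)$ bookkeeping. Inverting the viewpoint, fix the target $y$. The classes contributing to entry $(\operatorname{idx}(y),\operatorname{idx}(x))$ are those $D\in D^{N+1}_{\xi,\eta}$ with $\eta=g^{-1}(z)$ and $y\in\mathcal{N}^-(z)$, i.e.\ $z\in\mathcal{N}^+(y)$, each with sign $(-1)^{n-|D|+\kappa(y,z)}$. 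Summing over $n$ gives
\[
\sum_n \frac{t^n}{n!}(-1)^{n}f^D(n)(-1)^{|D|+\kappa(y,z)}=(-1)^{|D|+\kappa(y,z)}\mathcal{F}^D(-t).
\]
Summing over $D$ and $z$ then yields the formula. Absolute convergence, which follows from the entire-function remark after \eqref{egf}, justifies the rearrangement.

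Finally, I would check that the minimal-diagram case of the second lemma fits the same formula. When $D$ does not fix all outline steps, the free sites are forced to agree between start and end configurations. Summing the second lemma over the pairs $(w,q)\in A(x,y,I_D)$ then reproduces exactly the terms indexed by $\xi=g^{-1}(x)$, $\eta=g^{-1}(z)$ with matching free coordinates. The empty diagram, which gives the identity term of $e^{Ht}$, also needs to be checked separately for consistency with $\mathcal{F}^{\emptyset}=1$.
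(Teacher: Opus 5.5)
Your route is the paper's: expand $e^{Ht}$ into words, group them by labeled poset $(D,\Sigma,\ell_D)$, count the length-$n$ words in a class by $f^D(n)$ with sign $(-1)^{n-|D|}$, read off entries with Lemmas 1 and 2, turn $y\in\mathcal{N}^-(z)$ into $z\in\mathcal{N}^+(y)$, and resum in $n$. You are, if anything, more careful than the paper about identifying words with $F(D,n)$; the paper uses that identification implicitly.

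\textbf{Minimal diagrams.} The genuine gap is the step you only sketch, namely that the minimal diagrams of Lemma 2 ``fit the same formula''. The inversion ``$D\in D^{N+1}_{\xi,\eta}$ contributes to row $\operatorname{idx}(y)$ whenever $g(\eta)\in\mathcal{N}^+(y)$'' needs every corner of $A$ to lie in $D=A\setminus B$. Lemma 1 supplies this for non-minimal $D$. For a minimal $D$, however, some corners of $A$ lie in $B$, and Lemma 2 only reaches rows $z\in\mathcal{N}^-(q,I_D)$. The empty diagram, whose check you postpone, is decisive, and the check fails. Since $\mathcal{F}^D(0)=f^D(0)$ vanishes unless $D=\emptyset$, and $\emptyset\in D^{N+1}_{\xi,\eta}$ exactly when $\xi=\eta$, the right-hand side at $t=0$ equals $(-1)^{\kappa(y,x)}$ if $x\in\mathcal{N}^+(y)$ and $0$ otherwise. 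Take $N=2$, $y=(1,0)$, $x=(0,1)$. This is an interior move, so no boundary convention is involved. The formula then gives $-1$, while $P(0)=I$ requires $0$. Excluding $\emptyset$ instead breaks the diagonal entries. So the identity cannot hold as stated. What the lemmas actually support is a corrected version: for each $z\in\mathcal{N}^+(y)$, keep only those $D=A\setminus B$ that contain the corners of $A$ encoding the particles moved between $z$ and $y$. In particular $\emptyset$ then contributes only $\delta_{xy}$. The paper's proof passes over exactly the same point when it invokes the two lemmas to obtain the sum over $z$ and $D$.

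\textbf{Rearrangement.} Separately, your justification of the rearrangement is too weak. The entire-function remark only gives $\sum_n f^D(n)|t|^n/n!\le e^{|D|\,|t|}$ for each fixed $D$. These bounds grow with $|D|$ and say nothing about exchanging $\sum_n$ with the infinite sum over $D^{N+1}_{\xi,\eta}$. You need a bound uniform over diagrams, which is what the paper uses. Distinct classes are disjoint sets of words, and for fixed $x$ each $D$ occurs for only one $z$. Hence $\sum_{z}\sum_{D}f^D(n)\le (N+1)^n$, and the double series is dominated by $e^{(N+1)|t|}$.

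\textbf{Sign convention.} Finally, $\kappa=\frac12\|y-z\|_1$ is a half-integer for entry and exit moves. The sign should therefore be read as $(-1)^{|C|}$, with $|C|$ the number of moved particles, as in the proof of Lemma 1.
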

\begin{proof}
    We start by proving that any non-zero word $s\in\Sigma^n$ lies in an equivalence class identified by a labeled poset of the form:
    \begin{equation}
    \label{labelled poset}
    \boldsymbol{D}=(D,\Sigma,\ell_D),
    \end{equation}
    where $D$ is a diagram in $D^{N+1}_{\xi,\eta}$ for some paths $\xi,\,\eta\in\{\uparrow,\rightarrow\}^N$ and $\ell_D$ is as in \eqref{labeling}.
    
    Take a generic triplet $\boldsymbol{P}=(P,\Sigma,\ell)$, where $P$ is a poset, $\Sigma$ is the usual alphabet set, and $\ell$ is a labeling function $\ell\colon P\to\Sigma$. Consider two elements $a,\, b\in P$ such that $\ell(a)=\ell(b)=h_i$ for $i\in\{2,\dots,N\}$ and for all elements $c\in P$ for which $a\prec c\prec b$ there does not exist any $c'$ such that $\ell(c')=h_i$.

    Due to the commutation relations in \eqref{algebraic rel} one has that $\ell(c')\in\{h_{i-1},h_{i+1}\}$ for all $c'$ such that $a\prec c\prec b$  and, because of the cancellation rules, there can be at most two such elements in $P$. If only one factor was present, then any reduced word induced by $\boldsymbol{P}$ would contain a factor of the form $h_i \, h_{i\pm 1}\, h_i$ and thus be null due to \eqref{algebraic rel}. If one repeats the same process for all other labels then it recovers that $\boldsymbol{P}$ is either of the form \eqref{labelled poset} or the matrices in its equivalence class are null.
    
    We now wish to compute the contribution of all such words to $\big( P(t)\big)_{ij}$, we start by noting that
    \begin{equation*}
        \big(P(t)\big)_{ij}=\sum_{n=0}^\infty \left(H^n_N\right)_{ij} \frac{t^n}{n!}=\sum_{n=0}^\infty\sum_D(-1)^{n-|D|+\kappa(y,q)}\,f^D(n)\frac{t^n}{n!},
    \end{equation*}
    where the sum over $D$ runs over all diagrams for which the matrices in the equivalence class of $(D,\Sigma,\ell)$ have a non-zero entry in $(i,j)$, and $q\in X_N$ is such that $f(D)=q$. We can now characterize all such diagrams and their contribution through Lemma~1 and Lemma~2, obtaining
    \begin{equation}
        \label{prob exp}
        \big(P(t)\big)_{ij}=\sum_{n=0}^\infty\,\sum_{z\in\mathcal{N}^+(y)}\sum_{\ \,  D\in D^{N+1}_{\xi,\eta}}(-1)^{|D|+\kappa(y,z)}f^D(n)\frac{(-t)^n}{n!},
    \end{equation}
    where $\eta=g^{-1}(z)$, $\xi=g^{-1}(x)$. By setting
    \begin{equation*}
        a_{nk}=(-1)^{|D|+\kappa(y,z)}\frac{f^D(n)}{n!},
    \end{equation*}
    where $k$ is an index running on the diagrams appearing in the inner sums, equation \eqref{prob exp} becomes
    \begin{equation}
        \label{double sum}
        \big(P(t)\big)_{ij}=\sum_{n=0}^{\infty}\sum_{k=0}^{\infty} a_{nk}\,(-t)^n=\sum_{n=0}^{\infty}\sum_{k=0}^{K_n} a_{nk}\,(-t)^n,
    \end{equation}
    where the last equality is due to the fact that the inner sum is finite for any choice of $n$, as $f^{D}(n)=0$ whenever $n<|D|$. Thus, for any $n$ there must exist a finite $K_n$ for which $a_{nk}$ is zero for all $k>K_n$.
    
    We now focus on inner sum. We have:
    \begin{equation*}
        \sum_{k=0}^{\infty} |a_{nk}|=\sum_{z\in\mathcal{N}^+(y)}\sum_{\ \,  D\in D^{N+1}_{\xi,\eta}}\frac{f^D(n)}{n!}<\frac{(N+1)^n}{n!},
    \end{equation*}
    where the inequality is due to the fact that the sum is counting the words in $\Sigma^n$ subject to some restriction upon the ordering of their letters, thus is bound to be less than $|\Sigma^n|=(N+1)^n$. Then of course the Eq. \eqref{double sum} is bounded from above by 	$e^{(N+1)|t|}$ for all $t$ and the sum is absolutely convergent.
    
    Now, being that a series is absolutely convergent if and only if it is \textit{unconditionally convergent} \cite[Theorem~3.56]{rudin1976principles}, we  rearrange the sum in \eqref{prob exp} as
    \begin{equation}    
    \label{final sum}
    \begin{split}
    \big(P(t)\big)_{ij}&=\sum_{z\in\mathcal{N}^+(y)}\sum_{\ \,  D\in D^{N+1}_{\xi,\eta}}(-1)^{|D|+\kappa(y,z)}\sum_{n=0}^\infty\,f^D(n)\frac{(-t)^n}{n!}\\
    &=\sum_{z\in\mathcal{N}^+(y)}\sum_{\ \, D\in D^{N+1}_{\xi,\eta}} (-1)^{|D|+\kappa(y,z)}\mathcal{F}^D(-t),
    \end{split}
    \end{equation}
     where we recall the definition of $\mathcal{F}^D(t)$ in Section~\ref{comb_prel}. The proof is concluded.
\end{proof}

\section{Conclusions}

In the present work we have investigated combinatorial mappings associated with the
TASEP on a finite lattice with open boundaries.

First, we showed that the number of transition sequences of prescribed length between
two configurations can be identified with the number of standard Young tableaux of
suitable non-classical diagrams determined by the corresponding outlines and by the
length of the walk. This extends to the open-boundary setting the correspondence
previously established for the TASEP with periodic boundary conditions
\cite{elizalde2022walks}. In the case of closed walks, and in particular of closed
walks based at the empty configuration, the construction recovers families of diagrams
that have recently attracted interest in enumerative combinatorics
\cite{periodiPPartitions,Sun2015EnumerationOS,VeraLopez2017}.

Subsequently we have provided a combinatorial interpretation to the finite-time transition probabilities of the system, viewing them as a signed sum of exponential generating functions of objects related to Young diagrams. This
complements the combinatorial interpretations of the steady-state distribution discussed
in \cite{Blythe_2007} by giving an analogous perspective on transient probabilities.

Several questions remain open. First, to our knowledge, no explicit enumeration formulas
are currently available for these exponential generating functions, except in elementary
cases. Second, it remains to be understood whether the recurrence and asymptotic results
known for the number \(f_D\) of standard Young tableaux of shifted strips and shifted
shapes extend to the generalized numbers \(f_D(n)\) introduced here.

We hope that the correspondence developed in this work will encourage further dialogue
between the theory of interacting particle systems and contemporary problems in
enumerative combinatorics, and that it may help clarify the structure of finite-time
transition probabilities for the TASEP on finite lattices with open boundaries.


\section*{Acknowledgments}
I sincerely thank Professors Marco Morandotti, Alessandro Pelizzola, and Marco Pretti for the numerous discussions and suggestions that made this work possible.


\bibliographystyle{plain}
\bibliography{sample}

\end{document}